\newtheorem{theorem}{Theorem}
\newtheorem{lemma}{Lemma}
\newtheorem{definition}{Definition}
\begin{document}

% paper title
\title{Reduced ML-Decoding Complexity, Full-Rate STBCs for $2^a$ Transmit Antenna Systems}
\author{
\authorblockN{K. Pavan Srinath and B. Sundar Rajan,\\}
\authorblockA{Dept of ECE, Indian Institute of science, \\
Bangalore 560012, India\\
Email:\{pavan,bsrajan\}@ece.iisc.ernet.in\\
}
}
% make the title area
\maketitle
\begin{abstract}
For an $n_t$ transmit, $n_r$ receive antenna system ($n_t \times n_r$ system), a {\it{full-rate}} space time block code (STBC) transmits $n_{min} = min(n_t,n_r)$ complex symbols per channel use and in general, has an ML-decoding complexity of the order of $M^{n_tn_{min}}$ (considering square designs), where $M$ is the constellation size. In this paper, a scheme to obtain a full-rate STBC for $2^a$ transmit antennas and any $n_r$, with reduced ML-decoding complexity of the order of $M^{n_t(n_{min}-\frac{3}{4})}$, is presented. The weight matrices of the proposed STBC are obtained from the unitary matrix representations of a Clifford Algebra. For any value of $n_r$, the proposed design offers a reduction from the full ML-decoding complexity by a factor of $M^{\frac{3n_t}{4}}$. The well known Silver code for 2 transmit antennas is a special case of the proposed scheme. Further, it is shown that the codes constructed using the scheme have higher ergodic capacity than the well known punctured Perfect codes for $n_r < n_t$. Simulation results of the symbol error rates are shown for $8 \times 2$ systems, where the comparison of the proposed code is with the punctured Perfect code for 8 transmit antennas. The proposed code matches the punctured perfect code in error performance, while having reduced ML-decoding complexity and higher ergodic capacity.
\end{abstract}

\section{Introduction and Background}
Complex orthogonal designs (CODs) \cite{TJC}, \cite{TiH}, although they provide linear Maximum Likelihood (ML) decoding, do not offer a high rate of transmission. A full-rate code for an $n_t \times n_r$ MIMO system transmits $min(n_t,n_r)$ complex symbols per channel use. Among the CODs, only the Alamouti code for 2 transmit antennas is full-rate for a $2 \times 1$ MIMO system. A full-rate STBC can efficiently utilize all the degrees of freedom the channel provides. An increase in the rate also results in an increase in the ML-decoding complexity. The Golden code \cite{BRV} for 2 transmit antennas is an example of a full-rate STBC for any number of receive antennas. Until recently, the ML-decoding complexity of the Golden code was reported to be of the order of $M^4$, where $M$ is the size of the signal constellation. However, it was shown in \cite{john_barry1}, \cite{SrR_arxiv} that the Golden code has a decoding complexity of the order of $M^2\sqrt{M}$ only. A lot of attention is being given to reducing the ML-complexity of full-rate codes. Current research focuses on obtaining high rate codes with reduced ML-decoding complexity (refer to Sec. \ref{sec2} for a formal definition), since high rate codes are essential to exploit the available degrees of freedom of the MIMO channel. For 2 transmit antennas, the Silver code \cite{HTW}, \cite{PGA}, is a full-rate code with full-diversity and an ML-decoding complexity of order $M^2$ for square QAM. For 4 transmit antennas, Biglieri et. al. proposed a rate-2 STBC which has an ML-decoding complexity of $M^4\sqrt{M}$ for square QAM without full-diversity \cite{BHV}. It was, however, shown that there was no significant reduction in error performance at low to medium SNR when compared with the previously best known code - the DjABBA code \cite{HTW}. This code was obtained by multiplexing Quasi-orthogonal designs(QOD) for 4 transmit antennas \cite{JH}. In \cite{SrR_arxiv}, a new full-rate STBC for $4\times 2$ system with full diversity and an ML-decoding complexity of $M^4\sqrt{M}$ was proposed. This code was obtained by multiplexing the coordinate interleaved orthogonal designs (CIODs) for 4 transmit antennas \cite{ZS}. These results show that codes obtained by multiplexing low complexity STBCs can result in high rate STBCs with reduced ML-decoding complexity and without any significant degradation in the error performance when compared with the best existing STBCs. Such an approach has also been adopted in \cite{Robert} to obtain high rate codes from multiplexed orthogonal designs.

In general, it is not known how one can design full-rate STBCs for arbitrary number of transmit and receive antennas with reduced ML-decoding complexity. Such a design has been presented for $n_t = 4$ in \cite{srinath}. It is known how to design information lossless codes \cite{damen_info} for the case where $n_r \geq n_t$. However, it is not known how to design information lossless codes when $n_r < n_t$. In this paper, we design codes which have higher ergodic capacity at high signal to noise ratio ($SNR$) than the best existing codes (the Perfect codes \cite{new_per}) for $n_r < n_t$. The resulting codes also have lower ML-decoding complexity than the comparable punctured Perfect codes. The contributions of the paper are: 

\begin{enumerate}
\item We analyze the ergodic capacity of MIMO channels with space time codes when $n_r < n_t$. We relate the entries of the $\textbf{R}$-matrix of the equivalent channel matrix to ergodic capacity at high $SNR$. 
\item We give a scheme to obtain rate-1, 4-group decodable codes (refer Section \ref{sec2} for a formal definition of multigroup decodable codes) for $n_t = 2^a$ through algebraic methods. The speciality of the obtained design is that it is amenable for extension to higher number of receive antennas, resulting in full-rate, reduced ML-decoding complexity codes for any number of receive antennas, unlike the previous constructions \cite{4gp1}, \cite{4gp2}, \cite{sanjay} of rate-1, 4-group decodable codes.
\item We propose a scheme to obtain full-rate, reduced ML-decoding complexity codes for $2^a$ transmit antennas and any number of receive antennas. These codes are also shown to have higher ergodic capacity than the comparable punctured Perfect codes for the case $n_r < n_t$, and lower ML-decoding complexity as well. In terms of error performance, the proposed codes have more or less the same performance as the corresponding punctured Perfect codes. This is shown through simulation results for the $8 \times 2$ MIMO system.
\end{enumerate}

The paper is organized as follows. In Section \ref{sec2}, we present the system model and the relevant definitions. The ergodic capacity analysis is presented in Section \ref{sec3} and the method to construct Rate-1, 4-group decodable codes is proposed in Section \ref{sec4}. The scheme to extend the code to obtain full-rate STBCs for higher number of receive antennas is presented in Section \ref{sec5}. Simulation results are discussed in Section \ref{sec6} and the concluding remarks are made in Section \ref{sec7}.

\textit{\textbf{Notations}:} Throughout, bold, lowercase letters are used to denote vectors and bold, uppercase letters are used to denote matrices. Let $\textbf{X}$ be a complex matrix. Then, $\textbf{X}^{H}$ and $\textbf{X}^{T}$ denote the Hermitian and the transpose of $\textbf{X}$, respectively and $j$ represents $\sqrt{-1}$. The $(i,j)^{th}$ entry of $\textbf{X}$ is denoted by $\textbf{X}(i,j)$ and $tr(\textbf{X})$ denotes the trace of $\textbf{X}$. The set of all real and complex numbers are denoted by $\mathbb{R}$ and $\mathbb{C}$, respectively. The real and the imaginary part of a complex number $x$ are denoted by $x_I$ and $x_Q$, respectively. $\Vert \textbf{X} \Vert$ denotes the Frobenius norm of $\textbf{X}$, and $\textbf{I}_T$ and $\textbf{O}_T$ denote the $T\times T$ identity matrix and the null matrix, respectively. The Kronecker product is denoted by $\otimes$. For a complex random variable $X$, $\mathcal{E}[X]$ denotes the mean of $X$. The inner product of two vectors $\textbf{x}$ and $\textbf{y}$ is denoted by $\langle \textbf{x},\textbf{y} \rangle$.

For a complex variable $x$, the $\check{(.)}$ operator acting on $x$ is defined as 
\begin{equation*}
\check{x} \triangleq \left[ \begin{array}{rr}
                             x_I & -x_Q \\
                             x_Q & x_I \\
                            \end{array}\right].
\end{equation*}
The $\check{(.)}$ can similarly be applied to any matrix $\textbf{X} \in \mathbb{C}^{n \times m}$ by replacing each entry $x_{ij}$ by $\check{x}_{ij}$, $i=1,2,\cdots, n, j = 1,2,\cdots,m$ , resulting in a matrix denoted by $\check{\textbf{X}} \in \mathbb{R}^{2n \times 2m}$.

Given a complex vector $\textbf{x} = [ x_1, x_2, \cdots, x_n ]^T$, $\tilde{\textbf{x}}$ is defined as
\begin{equation*}
\tilde{\textbf{x}} \triangleq [ x_{1I},x_{1Q}, \cdots, x_{nI}, x_{nQ} ]^T.
\end{equation*}

\section{System Model}
\label{sec2}
We consider Rayleigh block fading MIMO channel with full channel state information (CSI) at the receiver but not at the transmitter. For $n_t \times n_r$ MIMO transmission, we have
\begin{equation}\label{Y}
\textbf{Y} = \sqrt{\frac{SNR}{n_t}}\textbf{HS + N},
\end{equation}

\noindent where $\textbf{S} \in \mathbb{C}^{n_t \times T}$ is the codeword matrix whose average energy is given by $\mathcal{E}(\Vert \textbf{S} \Vert^2) = n_tT$, transmitted over $T$ channel uses, $\textbf{N} \in \mathbb{C}^{n_r \times T}$ is a complex white Gaussian noise matrix with i.i.d entries $\sim
\mathcal{N}_{\mathbb{C}}\left(0,1\right)$ and $\textbf{H} \in \mathbb{C}^{n_r\times n_t}$ is the channel matrix with the entries assumed to be i.i.d circularly symmetric Gaussian random variables $\sim \mathcal{N}_\mathbb{C}\left(0,1\right)$. $\textbf{Y} \in \mathbb{C}^{n_r \times T}$ is the received matrix and $SNR$ is the signal to noise ratio at each receive antenna.

\begin{definition}\label{def1}$\left(\textbf{Code rate}\right)$ Code rate is the average number of independent information symbols transmitted per channel use. If there are $k$ independent complex information symbols (or $2k$ real information symbols) in the codeword which are transmitted over $T$ channel uses, then, the code rate is $k/T$ complex symbols per channel use ($2k/T$ real symbols per channel use).
\end{definition}

\begin{definition}\label{def2}$\left(\textbf{Full-rate STBCs}\right)$ For an $n_t \times n_r$ MIMO system, if the code rate is $min\left(n_t,n_r\right)$ complex symbols per channel use, then the STBC is said to be \emph{\textbf{full-rate}}.
\end{definition}

 Assuming ML-decoding, the ML-decoding metric that is to be minimized over all possible values of codewords $\textbf{S}$ is given by
 \begin{equation}
\label{ML}
 \textbf{M}\left(\textbf{S}\right) = \Vert \textbf{Y} - \sqrt{\frac{SNR}{n_t}}{}\textbf{HS} \Vert^2
 \end{equation}

\begin{definition}\label{def3}$\left(\textbf{ML-Decoding complexity}\right)$
The ML decoding complexity is measured in terms of the maximum number of symbols that need to be jointly decoded in minimizing the ML decoding metric.
\end{definition}
For example, if the codeword transmits $k$ independent symbols of which a maximum of $p$ symbols need to be jointly decoded, the ML-decoding complexity is of the order of $M^{p}$, where $M$ is the size of the signal constellation. If the code has an ML-decoding complexity of order less than $M^k$, the code is said to admit \emph{\textbf{reduced ML-decoding}}.

\begin{definition}\label{def4}$\left(\textbf{Generator matrix}\right)$ For any STBC that encodes $2k$ real symbols (or $k$ complex information symbols), the \textbf{\emph{generator}} matrix $\textbf{G}$ is defined by the following equation  \cite{BHV}.
\begin{equation*}
\widetilde{vec\left(\textbf{S}\right)} = \textbf{G} \textbf{s},
\end{equation*}
\noindent where $\textbf{S}$ is the codeword matrix, $\textbf{s} \triangleq \left[ s_1, s_2,\cdots,s_{2k} \right]^T$ is the real information symbol vector.
\end{definition}

A codeword matrix of an STBC can be expressed in terms of \textbf{\emph{weight matrices}} (linear dispersion matrices) \cite{HaH} as 
\begin{equation*}
\textbf{S} = \sum_{i=1}^{2k}s_{i}\textbf{A}_{i} .
\end{equation*}
Here, $\textbf{A}_i,i=1,2,\cdots,2k$ are the complex weight matrices for the STBC and should form a {\bf linearly independent} set over $\mathbb{R}$. It follows that
\begin{equation*}
\textbf{G} = [\widetilde{vec(\textbf{A}_1)}\ \widetilde{vec(\textbf{A}_2)}\ \cdots \ \widetilde{vec(\textbf{A}_{2k})}].
\end{equation*}

\begin{definition}\label{def5}({\bf Multigroup decodable STBCs}) An STBC is said to be $g$-group decodable \cite{sanjay} if its weight matrices can be separated into $g$ groups $\mathcal{G}_1$, $\mathcal{G}_2$, $\cdots$, $\mathcal{G}_g$ such that 
\begin{equation}\label{g_group}
\textbf{A}_i\textbf{A}_j^H + \textbf{A}_j\textbf{A}_i^H = \textbf{O}_{n_t}, ~~~~ \textbf{A}_i \in \mathcal{G}_l, \textbf{A}_j \in \mathcal{G}_p, l \neq p.
\end{equation}
\end{definition}

Equation \eqref{Y} can be rewritten as
\begin{equation*}
 \widetilde{vec(\textbf{Y})} = \sqrt{\frac{SNR}{n_t}}\textbf{H}_{eq}\textbf{s} + \widetilde{vec(\textbf{N})},
\end{equation*}
\noindent where $\textbf{H}_{eq} \in \mathbb{R}^{2n_rT\times 2n_{min}T}$ is given by
\begin{equation*}
 \textbf{H}_{eq} = \left(\textbf{I}_T \otimes \check{\textbf{H}}\right)\textbf{G},
\end{equation*}
with $\textbf{G} \in \mathbb{R}^{2n_tT\times 2n_{min}T}$ being the generator matrix as in Def. \ref{def4}. 

\section{Relationship between weight matrices and ergodic capacity}\label{sec3}

It has been shown that if the generator matrix is unitary, the STBC does not reduce the ergodic capacity of the MIMO channel \cite{damen_info}, \cite{JJK}. For the generator matrix to be unitary, a prerequisite is that the number of receive antennas should be atleast equal to the number of transmit antennas, because only then will the generator matrix be square. When $n_r < n_t$, only the Alamouti code has been known to achieve the ergodic capacity (by saying that an STBC achieves the ergodic capacity, we mean that with the use of a suitable outer code in conjuntion with the STBC, capacity can be achieved) of the $2 \times 1$ MIMO channel.  Since it is difficult to make an exact analysis of the ergodic capacity when $n_r < n_t$, we make an approximate analysis in the low and high SNR range. The ergodic capacity with the use of a space time code is given as follows \cite{JJK}.

\begin{equation}
 \mathcal{C} = \frac{1}{2T}\mathcal{E}_{\textbf{H}}logdet\left(\textbf{I}_{2n_rT} + \frac{SNR}{n_t}\textbf{H}_{eq}\textbf{H}_{eq}^T\right)
\end{equation} 

\subsection{Low SNR analysis}

Let $ \textbf{H}_{eq}\textbf{H}_{eq}^T = \textbf{U}\textbf{D}\textbf{U}^T$ be the singular value decomposition of $\textbf{H}_{eq}\textbf{H}_{eq}^T$. Let $\textbf{D} = diag[d_1,d_2,\cdots,d_{2Tn_r}]$ and $\textbf{H}_{eq} = [ \textbf{h}_1, \textbf{h}_2, \cdots, \textbf{h}_{2Tn_r}]$. We have,
\begin{eqnarray*}
 \mathcal{C}  & = & \frac{1}{2T}\mathcal{E}_{\textbf{H}}logdet\left(\textbf{I}_{2n_rT} + \frac{SNR}{n_t} \textbf{U}\textbf{D}\textbf{U}^T\right) \\
& = & \frac{1}{2T}\mathcal{E}_{\textbf{H}}logdet\left(\textbf{I}_{2n_rT} + \frac{SNR}{n_t} \textbf{D}\right) \\
& = & \frac{1}{2T}\mathcal{E}_{\textbf{H}}\left( log \prod_{i=1}^{2Tn_r}\left(1 + \frac{SNR}{n_t}d_i\right)\right) \\
& = & \frac{1}{2T}\mathcal{E}_{\textbf{H}}\left( \sum_{i=1}^{2Tn_r}log\left(1 + \frac{SNR}{n_t}d_i\right)\right) \\
& \approx & \frac{1}{2T}\mathcal{E}_{\textbf{H}} \left( \sum_{i=1}^{2Tn_r}\frac{SNR}{n_t}d_i\right) \\
& \approx & \frac{SNR}{2n_tT} \mathcal{E}_{\textbf{H}}\left( tr\left(\textbf{H}_{eq}\textbf{H}_{eq}^T\right)\right) \\
& \approx & \frac{SNR}{2n_tT} \mathcal{E}_{\textbf{H}}\left( \Vert \textbf{H}_{eq} \Vert^2\right) \\
& \approx & \frac{SNR}{2n_tT} \mathcal{E}_{\textbf{H}}\left( \sum_{i=1}^{2Tn_r} \Vert \textbf{h}_{i} \Vert^2\right).
\end{eqnarray*} 

Since $\textbf{h}_i = 
(\textbf{I}_T \otimes \check{\textbf{H}}) \widetilde{vec(\textbf{A}_i)}$, we have
\begin{equation}\label{erg}
\Vert \textbf{h}_{i} \Vert^2 = \Vert \textbf{HA}_{i} \Vert^2 = tr(\textbf{HA}_i\textbf{A}_i^H\textbf{H}^H).
\end{equation}

The ergodic capacity of an $n_t \times n_r$ MIMO channel is given as
\begin{equation}
 \mathcal{C}_{n_t \times n_r} = \mathcal{E}_{\textbf{H}}logdet(\textbf{I}_{n_r} + \frac{SNR}{n_t}\textbf{H}\textbf{H}^H).
\end{equation} 

In the low SNR scenario, 
\begin{equation}
 \mathcal{C}_{n_t \times n_r}  \approx \frac{SNR}{n_t} \mathcal{E}_{\textbf{H}} ( \Vert \textbf{H} \Vert^2 ).
\end{equation} 

Hence, in the low SNR scenario, if $\textbf{A}_i\textbf{A}_i^H = \frac{1}{n_r}\textbf{I}_{n_t}, \forall i = 1,2,\cdots,2Tn_r$, then, $\mathcal{C} = \mathcal{C}_{n_t \times n_r}$, which is evident from \eqref{erg}.

\subsection{High SNR analysis}
For this purpose, we use the $\textbf{QR}$ decomposition of $\textbf{H}_{eq}$. $\textbf{Q}$ and $\textbf{R}$ have the general form obtained by $Gram-Schmidt$ process as
\begin{equation*}
 \textbf{Q} \triangleq [ \textbf{q}_1\ \textbf{q}_2 \ \textbf{q}_3 \cdots \textbf{q}_{2Tn_r} ],
\end{equation*}
where $\textbf{q}_i, i = 1,2,\cdots,2Tn_r$ are column vectors, and
\begin{equation*}
 \textbf{R} \triangleq \left[\begin{array}{ccccc}
\Vert \textbf{r}_1 \Vert & \langle \textbf{q}_1,\textbf{h}_2 \rangle & \langle \textbf{q}_1,\textbf{h}_3 \rangle & \ldots &  \langle \textbf{q}_1,\textbf{h}_{2Tn_r} \rangle\\
0 & \Vert \textbf{r}_2 \Vert & \langle \textbf{q}_2,\textbf{h}_3 \rangle & \ldots & \langle \textbf{q}_2,\textbf{h}_{2Tn_r} \rangle\\
0 & 0 &  \Vert \textbf{r}_3 \Vert & \ldots & \langle \textbf{q}_3,\textbf{h}_{2Tn_r} \rangle\\
\vdots & \vdots & \vdots & \ddots & \vdots\\
0 & 0 & 0 & \ldots & \Vert \textbf{r}_{2Tn_r} \Vert\\
\end{array}\right]
\end{equation*}
\noindent where $\textbf{r}_1 = \textbf{h}_1$, $\textbf{q}_1 = \frac{\textbf{r}_1}{\Vert \textbf{r}_1 \Vert}$, $\textbf{r}_i = \textbf{h}_i - \sum_{j=1}^{i-1}\langle \textbf{q}_j,\textbf{h}_i \rangle \textbf{q}_j$ and $\ \textbf{q}_i = \frac{\textbf{r}_i}{\Vert \textbf{r}_i \Vert},\ i = 2,3,\cdots,2Tn_r$. We have,

\begin{eqnarray*}
 \mathcal{C}  & = & \frac{1}{2T}\mathcal{E}_{\textbf{H}}logdet\left(\textbf{I}_{2n_rT} + \frac{SNR}{n_t} \textbf{H}_{eq}\textbf{H}_{eq}^T\right) \\
& = & \frac{1}{2T}\mathcal{E}_{\textbf{H}}logdet\left(\textbf{I}_{2n_rT} + \frac{SNR}{n_t} \textbf{QR}\textbf{R}^T\textbf{Q}^T\right) \\
& = & \frac{1}{2T}\mathcal{E}_{\textbf{H}}logdet\left(\textbf{I}_{2n_rT} + \frac{SNR}{n_t} \textbf{R}\textbf{R}^T\right) \\
& \approx & \frac{1}{2T}\mathcal{E}_{\textbf{H}}logdet\left(\frac{SNR}{n_t} \textbf{R}\textbf{R}^T\right) \\
& \approx & n_r log\left(\frac{SNR}{n_t}\right)+\frac{1}{2T}\mathcal{E}_{\textbf{H}}logdet\left( \textbf{R}\textbf{R}^T\right).
\end{eqnarray*}

Using the well known identity that the determinant of a triangular matrix is the product of its diagonal elements, we have
\begin{eqnarray*}
 \mathcal{C}  & \approx & n_r log\left(\frac{SNR}{n_t}\right)+\frac{1}{2T}\mathcal{E}_{\textbf{H}}log \prod_{i=1}^{2Tn_r}\textbf{R}(i,i)^2 \\
& \approx & n_r log\left(\frac{SNR}{n_t}\right)+\frac{1}{2T}\mathcal{E}_{\textbf{H}}\sum_{i=1}^{2Tn_r}log \textbf{R}(i,i)^2. 
\end{eqnarray*}

From the definition of the $\textbf{R}$-matrix, we have, 
\begin{eqnarray*}
\textbf{R}(i,i)^2  & = & \Vert \textbf{r}_i \Vert^2\\
& = & \langle \textbf{r}_i,\textbf{r}_i\rangle \\
& = & \left\langle \left(\textbf{h}_i - \sum_{j=1}^{i-1}\langle \textbf{q}_j,\textbf{h}_i \rangle \textbf{q}_j\right), \left(\textbf{h}_i - \sum_{j=1}^{i-1}\langle \textbf{q}_j,\textbf{h}_i \rangle \textbf{q}_j\right)\right\rangle  \\
& = & \Vert \textbf{h}_i \Vert^2 - \sum_{j=1}^{i-1}\langle \textbf{q}_j,\textbf{h}_i \rangle^2.
\end{eqnarray*}

Hence,
\begin{equation}\label{highsnr}
 \mathcal{C}  \approx n_r log\left(\frac{SNR}{n_t}\right)+\frac{1}{2T}\sum_{i=1}^{2Tn_r}\mathcal{E}_{\textbf{H}}log \left(\Vert \textbf{h}_i \Vert^2 - \sum_{j=1}^{i-1}\langle \textbf{q}_j,\textbf{h}_i \rangle^2\right). 
\end{equation}

Equation \eqref{highsnr} tells us that at high $SNR$, the entries of the $\textbf{R}$-matrix, i.e, $\langle \textbf{q}_j,\textbf{h}_i \rangle$ dictate the ergodic capacity. If the number of zero entries in the upper block of the $\textbf{R}$-matrix is larger, then the ergodic capacity will be higher. Hence, it is essential that the $\textbf{R}$-matrix has as many zeros as possible. 

In \cite{SrR_arxiv} [Thm. 1], it has been shown that if $\textbf{A}_i\textbf{A}_j^H + \textbf{A}_j\textbf{
A}_i^H = \textbf{O}_{n_t}$, then, the $i^{th}$ and the $j^{th}$ columns of $\textbf{H}_{eq}$ are orthogonal. From the definition of $\textbf{R}$-matrix, column orthogonality of $\textbf{H}_{eq}$ dictates the presence of zeros. Hence, to design a good STBC when $n_r < n_t$, the equivalent channel matrix should have groups of columns orthogonal to one another. We would, of course, like all the columns to be orthogonal, but there is a limit to the number, the limit being the maximum number of Hurwitz-Radon matrices for $n_t$ transmit antennas. Except for the Alamouti code, this number is much lesser than $2Tn_r$, which is the number of weight matrices of a full-rate STBC when $n_r <n_t$. So, evidently, when $n_r < n_t$, higher ergodic capacity at high $SNR$ means lower ML-decoding complexity (because of column orthogonality). Hence, to construct an STBC with high ergodic capacity, we first construct rate-1 STBCs with the lowest possible ML-decoding complexity. So far, the known least ML-decoding complexity rate-1 codes are the rate-1, 4-group decodable codes. But the codes mentioned in literature \cite{4gp1}, \cite{4gp2}, \cite{sanjay} are not suitable for extension to higher number of receive antennas, since their design is obtained by iterative methods. Hence, in the next section, we propose a new design methodology to obtain the weight matrices of a rate-1, 4-group decodable code by algebraic methods for $2^a$ transmit antennas.

%%%%%%%%%%%%%%%%%%%%%%%%%%%%%%%%%%%%%%%%%%%%%%%%%%%%%%%%%%%%%%%%%%%%%%%%%%%%
\section{Construction of Rate-1, 4-group decodable codes}\label{sec4}
We make use of the following Theorem, presented in \cite{4gp2}, to construct rate-1, 4-group decodable codes for $n = 2^a$ transmit antennas. 
\begin{theorem}
\label{thm1}
\cite{4gp2} An $n \times n$ linear dispersion code transmitting k real symbols is $g$-group decodable if the weight matrices satisfy the following conditions:
\begin{enumerate}
\item $\textbf{A}_i^2   =  \textbf{I}_n, i \in \{1,2,\cdots,\frac{k}{g}\}$.
\item $\textbf{A}_j^2   =  -\textbf{I}_n, j \in \{\frac{mk}{g} + 1, m = 1,2,\cdots,g-1\}$.
\item $\textbf{A}_i\textbf{A}_j = \textbf{A}_i\textbf{A}_j, i,j \in \{1,2,\cdots,\frac{k}{g} \}$.
\item $\textbf{A}_i\textbf{A}_j =  \textbf{A}_i\textbf{A}_j, i \in \{1,2,\cdots,\frac{k}{g} \}, j \in \{ \frac{mk}{g} + 1, m = 1,2,\cdots,g-1\}$.
\item $\textbf{A}_i\textbf{A}_j  =  -\textbf{A}_i\textbf{A}_j, i, j \in \{  \frac{mk}{g} + 1, m = 1,2,\cdots,g-1\}, i \neq j$.
\item $\textbf{A}_{\frac{mk}{g} + i} = \textbf{A}_i\textbf{A}_{\frac{mk}{g}+1}, m \in \{ 1,2,\cdots,g-1\}$, $i \in \{1,2,\cdots,g \} $.
\end{enumerate}

Table \ref{table1} illustrates the weight matrices of a $g$-group decodable code which satisfy the above conditions. The weight matrices in each column belong to the same group.
\begin{table*}
\begin{center}
\begin{tabular}{|l|l|l|l|} 
\hline
$\textbf{A}_1 = I_n$        &  $\textbf{A}_{\frac{k}{g}+1}$ & $\ldots$ &  $\textbf{A}_{\frac{(g-1)k}{g}+1}$ \\ \hline
$\textbf{A}_2$              &  $\textbf{A}_{\frac{k}{g}+2} = \textbf{A}_2\textbf{A}_{\frac{k}{g}+1}$ & $\ldots$ & $\textbf{A}_{\frac{(g-1)k}{g}+2} = \textbf{A}_2\textbf{A}_{\frac{(g-1)k}{g}+1}$ \\ \hline
$ \vdots$    & $\vdots$ & $\ldots$ & $\vdots$  \\ \hline
$\textbf{A}_{\frac{k}{g}}$  & $\textbf{A}_{\frac{2k}{g}} = \textbf{A}_{\frac{k}{g}}\textbf{A}_{\frac{k}{g}+1}$ & $\ldots$ & $\textbf{A}_{k} = \textbf{A}_{\frac{k}{g}}\textbf{A}_{\frac{(g-1)k}{g}+1}$   \\ \hline
\end{tabular}
\caption{Weight matrices of a $g$-group decodable code}
\label{table1}
\end{center}
\end{table*}

\end{theorem}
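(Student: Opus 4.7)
The plan is to verify the defining identity of $g$-group decodability, $\textbf{A}_\alpha\textbf{A}_\beta^H + \textbf{A}_\beta\textbf{A}_\alpha^H = \textbf{O}_n$ from Definition~\ref{def5}, for every pair $(\textbf{A}_\alpha,\textbf{A}_\beta)$ drawn from different columns of Table~\ref{table1}. The first move is to install a Hermitian/skew-Hermitian dichotomy by invoking the standing unitarity convention for linear dispersion matrices: unitarity combined with $\textbf{A}^2=\textbf{I}_n$ forces $\textbf{A}^H=\textbf{A}$, while $\textbf{A}^2=-\textbf{I}_n$ forces $\textbf{A}^H=-\textbf{A}$. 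Thus conditions (1) and (2) make every column-1 matrix Hermitian and every ``representative'' $\textbf{A}_{mk/g+1}$ ($m\ge 1$) skew-Hermitian. A general entry $\textbf{A}_{mk/g+i}=\textbf{A}_i\textbf{A}_{mk/g+1}$ given by condition (6) is then a product of a commuting Hermitian--skew-Hermitian pair (commutation supplied by condition (4)), so it is skew-Hermitian as well. Consequently column~1 consists entirely of Hermitian matrices and every other column consists entirely of skew-Hermitian matrices.

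With this dichotomy, the Definition~\ref{def5} identity reduces to pure (anti)commutation in two cases. Case~A: $\textbf{A}_\alpha$ lies in column~1 and $\textbf{A}_\beta=\textbf{A}_j\textbf{A}_{mk/g+1}$ lies outside column~1. The Hermitian/skew-Hermitian structure collapses the requirement to $\textbf{A}_\alpha\textbf{A}_\beta=\textbf{A}_\beta\textbf{A}_\alpha$, which follows by applying (3) and (4) to slide $\textbf{A}_\alpha$ past each factor of $\textbf{A}_\beta$. Case~B: $\textbf{A}_\alpha=\textbf{A}_i\textbf{A}_{mk/g+1}$, $\textbf{A}_\beta=\textbf{A}_j\textbf{A}_{m'k/g+1}$ with $m\neq m'$. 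Both are skew-Hermitian, so the requirement becomes anticommutation $\textbf{A}_\alpha\textbf{A}_\beta+\textbf{A}_\beta\textbf{A}_\alpha=\textbf{O}_n$. Pushing $\textbf{A}_i$ and $\textbf{A}_j$ through the representatives using (3)--(4), the two products reduce respectively to $\textbf{A}_i\textbf{A}_j\textbf{A}_{mk/g+1}\textbf{A}_{m'k/g+1}$ and $\textbf{A}_i\textbf{A}_j\textbf{A}_{m'k/g+1}\textbf{A}_{mk/g+1}$, and the single sign flip produced by (5) yields the desired cancellation.

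The main obstacle is not the algebra, which is essentially a bookkeeping exercise once the dichotomy is in place, but rather the apparent typographical errors in conditions (3)--(5) as displayed, where both sides are written identically. These must be read as $\textbf{A}_i\textbf{A}_j=\textbf{A}_j\textbf{A}_i$ in (3) and (4) and as $\textbf{A}_i\textbf{A}_j=-\textbf{A}_j\textbf{A}_i$ in (5), which is the natural reading consistent with \cite{4gp2}. A secondary subtlety is that conditions (1)--(2) only constrain $\textbf{A}^2$ rather than $\textbf{A}\textbf{A}^H$, so the passage from squares to Hermitian conjugates relies on the implicit unitarity of the weight matrices; this hypothesis should be stated up front before the dichotomy is invoked.
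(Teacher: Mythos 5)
The paper offers no internal proof of Theorem~\ref{thm1} to compare against: it is imported wholesale from \cite{4gp2} and used as a black box, so your argument is filling a genuine gap rather than paralleling anything in the text. On its own merits the proof is correct. The Hermitian/skew-Hermitian dichotomy is the right organizing idea: with unitary weight matrices, $\textbf{A}^2=\textbf{I}_n$ gives $\textbf{A}^H=\textbf{A}^{-1}=\textbf{A}$ and $\textbf{A}^2=-\textbf{I}_n$ gives $\textbf{A}^H=-\textbf{A}$, the commutation in condition (4) propagates skew-Hermitianness to every product $\textbf{A}_i\textbf{A}_{mk/g+1}$, and Definition~\ref{def5} then collapses to commutation (column~1 versus the rest) or anticommutation (two non-initial columns), both of which follow by sliding factors using (3)--(5); the single sign flip from (5) in your Case~B is exactly what is needed. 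You are also right on both reading issues. Conditions (3)--(5) are evidently typeset with identical left and right sides and must be read as $\textbf{A}_i\textbf{A}_j=\pm\textbf{A}_j\textbf{A}_i$; likewise condition (6) should read $i\in\{1,2,\cdots,\frac{k}{g}\}$ to populate Table~\ref{table1}. And the unitarity caveat is not pedantry: since $\textbf{A}_1=\textbf{I}_n$ sits in the first group, Definition~\ref{def5} already forces every matrix outside that group to satisfy $\textbf{A}_j+\textbf{A}_j^H=\textbf{O}_n$, which does not follow from $\textbf{A}_j^2=-\textbf{I}_n$ alone, so the theorem is false as literally stated without the (implicit, and satisfied by the paper's Clifford-algebra construction) unitarity hypothesis. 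The one thing worth adding for completeness is a sentence confirming that the matrices in Table~\ref{table1} are pairwise distinct and linearly independent so that the grouping into $g$ columns of $k/g$ matrices is well defined, but this is minor.
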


In order to obtain a Rate-1, 4-group decodable STBC for $2^a$ transmit antennas, it is sufficient if we have $2^{a+1}$ matrices satisfying the conditions in Theorem \ref{thm1}. To obtain these, we make use of the following lemmas. 

\begin{lemma}\label{lemma_1}
Consider $n \times n$ matrices with complex entries.
If $n=2^a$ and $n \times n$ matrices $\textbf{F}_i, i=1,2,\cdots,2a$ anticommute pairwise, then the set of products $\textbf{F}_{i_1}\textbf{F}_{i_2}\cdots \textbf{F}_{i_s}$ with $1 \leq i_1 < \cdots < i_s \leq 2a$ along with $\textbf{I}_n$ forms a basis for the $2^{2a}$ dimensional space of all $n \times n$ matrices over $\mathbb{C}$. 
\end{lemma}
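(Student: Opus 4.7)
The strategy is to show that the $2^{2a}$ products $\textbf{F}_I := \textbf{F}_{i_1}\textbf{F}_{i_2}\cdots\textbf{F}_{i_s}$, indexed by subsets $I = \{i_1 < \cdots < i_s\}$ of $\{1,2,\ldots,2a\}$ (with $\textbf{F}_\emptyset := \textbf{I}_n$), form a basis of $M_n(\mathbb{C})$. A counting argument gives $|\{I\}| = \sum_{s=0}^{2a}\binom{2a}{s} = 2^{2a} = n^2 = \dim_{\mathbb{C}} M_n(\mathbb{C})$, so it suffices to prove linear independence. I will work in the setting of the paper, where the $\textbf{F}_i$ come from a unitary representation of a Clifford algebra, so each $\textbf{F}_i$ is invertible with $\textbf{F}_i^2 = \pm \textbf{I}_n$. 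Combined with pairwise anticommutation, this yields the closure identity $\textbf{F}_I \textbf{F}_J = \lambda_{I,J}\textbf{F}_{I \triangle J}$ for a nonzero scalar $\lambda_{I,J}$ (where $\triangle$ denotes symmetric difference), which is the algebraic workhorse of the argument.

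The key intermediate claim is that $\operatorname{tr}(\textbf{F}_K) = 0$ for every nonempty $K \subseteq \{1,\ldots,2a\}$. By anticommutation, conjugation by any $\textbf{F}_j$ satisfies $\textbf{F}_j \textbf{F}_K \textbf{F}_j^{-1} = (-1)^{|K \setminus \{j\}|}\textbf{F}_K$, since each factor of $\textbf{F}_K$ other than a possible $\textbf{F}_j$ contributes a $-1$. For nonempty $K$ I can always choose $j$ making this sign $-1$: if $|K|$ is odd then $|K| < 2a$ (as $2a$ is even) and I take $j \notin K$; if $|K|$ is even and positive, I take $j \in K$. Cyclicity of trace then gives $\operatorname{tr}(\textbf{F}_K) = -\operatorname{tr}(\textbf{F}_K)$, forcing the trace to vanish. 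To close the proof, suppose $\sum_I c_I \textbf{F}_I = 0$, fix $J$, and right-multiply by $\textbf{F}_J^{-1}$ to obtain $\sum_I c_I \lambda_{I,J}\textbf{F}_{I \triangle J} = 0$. Taking the trace and applying the vanishing result, only the term $I = J$ survives (since $I \triangle J = \emptyset$ only then, and $\operatorname{tr}(\textbf{I}_n) = n$); hence $c_J \cdot n = 0$, i.e.\ $c_J = 0$. Linear independence, and therefore the basis property, follows.

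The main obstacle is not conceptual but one of hypothesis hygiene: the trace argument uses the additional property $\textbf{F}_i^2 = \pm \textbf{I}_n$ (or at least invertibility with $\textbf{F}_i^2$ central), which is not spelled out in the statement but is automatic in the Clifford-algebra setting the paper works in. A secondary chore is verifying the non-vanishing of $\lambda_{I,J}$ when rearranging $\textbf{F}_I \textbf{F}_J^{-1}$ into canonical ordered form; this is a routine induction on $|I \cap J|$ in which each coincidence $\textbf{F}_i \textbf{F}_i = \pm \textbf{I}_n$ and each anticommutation $\textbf{F}_i \textbf{F}_j = -\textbf{F}_j \textbf{F}_i$ contributes an explicit nonzero sign, so $\lambda_{I,J} \in \{\pm 1\}$ in particular, and only its non-vanishing is used above.
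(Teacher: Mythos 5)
Your proof is correct, but it cannot be compared line-by-line with the paper's, because the paper offers no proof at all: Lemma \ref{lemma_1} is discharged with a bare citation to Shapiro and Martin's \emph{Anticommuting Matrices} note. What you have supplied is a complete standalone argument, and it is the standard one for this fact: count that the number of subsets of $\{1,\ldots,2a\}$ equals $\dim_{\mathbb{C}} M_n(\mathbb{C}) = 2^{2a}$, establish closure $\textbf{F}_I\textbf{F}_J = \lambda_{I,J}\textbf{F}_{I\triangle J}$ with $\lambda_{I,J} \neq 0$, prove that every nonempty product is traceless by conjugating with a suitably chosen $\textbf{F}_j$ (using that $2a$ is even to guarantee such a $j$ exists for both parities of $|K|$), and then isolate each coefficient of a vanishing linear combination by multiplying by $\textbf{F}_J^{-1}$ and taking traces. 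All of these steps check out; your tracelessness argument is in fact the same device the paper itself uses later in the proof of Lemma \ref{lemma4}. Your remark on ``hypothesis hygiene'' is also a genuine and worthwhile observation: as literally stated, the lemma is false (take all $\textbf{F}_i = \textbf{O}_n$, which anticommute pairwise), so invertibility --- or, as in the paper's Clifford-algebra setting, unitarity with $\textbf{F}_i^2 = -\textbf{I}_n$ --- must be added, and you correctly identify and use exactly that. The only thing your write-up buys at a cost is self-containment versus brevity: the paper outsources the entire argument, while you carry the small bookkeeping burden of verifying $\lambda_{I,J} \in \{\pm 1\}$, which you correctly reduce to a routine sign count.
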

\begin{proof}
Available in \cite{anti_matric}.
\end{proof}

\begin{lemma}\label{lemma_2}
If all the mutually anticommuting $n \times n$ matrices $\textbf{F}_i, i=1,2,\cdots,2a$ are unitary and anti-Hermitian, so that they square to $-\textbf{I}_n$, then the product $\textbf{F}_{i_1}\textbf{F}_{i_2}\cdots \textbf{F}_{i_s}$ with $1 \leq i_1 < \cdots < i_s \leq 2a$ squares to $(-1)^{\frac{s(s+1)}{2}}\textbf{I}_n$.
\end{lemma}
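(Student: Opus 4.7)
The plan is to verify the sign by a straightforward induction on $s$, exploiting the anticommutativity to move factors past one another and collecting signs as we go. Fix indices $1 \le i_1 < i_2 < \cdots < i_s \le 2a$ and write $P_s \triangleq \textbf{F}_{i_1}\textbf{F}_{i_2}\cdots\textbf{F}_{i_s}$. The base case $s=1$ is immediate since $P_1^2 = \textbf{F}_{i_1}^2 = -\textbf{I}_n = (-1)^{1\cdot 2/2}\textbf{I}_n$.

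For the inductive step, I would write $P_s = P_{s-1}\textbf{F}_{i_s}$, so that
\begin{equation*}
P_s^2 \;=\; P_{s-1}\textbf{F}_{i_s}P_{s-1}\textbf{F}_{i_s}.
\end{equation*}
The key observation is that $\textbf{F}_{i_s}$ anticommutes with each of the $s-1$ factors $\textbf{F}_{i_1},\ldots,\textbf{F}_{i_{s-1}}$ in $P_{s-1}$, so moving $\textbf{F}_{i_s}$ leftward past $P_{s-1}$ produces a sign $(-1)^{s-1}$. Using $\textbf{F}_{i_s}^2 = -\textbf{I}_n$ afterwards yields
\begin{equation*}
P_s^2 \;=\; (-1)^{s-1}\,P_{s-1}^2\,\textbf{F}_{i_s}^2 \;=\; (-1)^{s}\,P_{s-1}^2.
\end{equation*}
Substituting the inductive hypothesis $P_{s-1}^2 = (-1)^{s(s-1)/2}\textbf{I}_n$ gives $P_s^2 = (-1)^{s + s(s-1)/2}\textbf{I}_n = (-1)^{s(s+1)/2}\textbf{I}_n$, completing the induction.

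There is no real obstacle in this argument; the only thing to be careful about is bookkeeping of the two sign contributions at each inductive step (the $s-1$ anticommutations and the single $-1$ from squaring $\textbf{F}_{i_s}$), and verifying that the exponents $s + s(s-1)/2$ and $s(s+1)/2$ match. Anti-Hermiticity and unitarity are used only through their consequence $\textbf{F}_i^2 = -\textbf{I}_n$; the structural lemma from Lemma \ref{lemma_1} is not needed here, so the argument stays purely algebraic.
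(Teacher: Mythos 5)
Your proof is correct and is essentially the paper's argument in inductive clothing: the paper performs the same computation directly, moving each factor of the second copy of $\textbf{F}_{i_1}\cdots\textbf{F}_{i_s}$ leftward to accumulate the sign $(-1)^{(s-1)+(s-2)+\cdots+1}=(-1)^{s(s-1)/2}$ and then $(-1)^s$ from the $s$ squares, whereas you peel off one factor per inductive step and collect the same signs as $(-1)^s$ per step. The sign bookkeeping checks out in both versions, and you are right that only $\textbf{F}_i^2=-\textbf{I}_n$ and pairwise anticommutativity are needed.
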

\begin{proof}
$(\textbf{F}_{i_1}\textbf{F}_{i_2} \cdots \textbf{F}_{i_s})(\textbf{F}_{i_1}\textbf{F}_{i_2} \cdots \textbf{F}_{i_s})$
\begin{eqnarray*}
 & = & (-1)^{s-1}(\textbf{F}_{i_1}^2 \textbf{F}_{i_2} \cdots \textbf{F}_{i_s})(\textbf{F}_{i_2}\textbf{F}_{i_3} \cdots \textbf{F}_{i_s}) \\
& = & (-1)^{s-1}(-1)^{s-2}(\textbf{F}_{i_1}^2 \textbf{F}_{i_2}^2 \cdots \textbf{F}_{i_s})(\textbf{F}_{i_3}\textbf{F}_{i_4} \cdots \textbf{F}_{i_s}) \\
& = & (-1)^{[(s-1)+(s-2)+\cdots 1]}(\textbf{F}_{i_1}^2 \textbf{F}_{i_2}^2 \cdots \textbf{F}_{i_s}^2) \\
& = & (-1)^{\frac{s(s-1)}{2}}(-1)^{s}\textbf{I}_n \\
& = & (-1)^{\frac{s(s+1)}{2}}\textbf{I}_n.
\end{eqnarray*}
Hence proved.
\end{proof}

\begin{lemma}\label{lemma_3}
Let $\textbf{F}_i, i = 1,2,\cdots,2a$ be anticommuting, anti-Hermitian, unitary matrices. Let $\Omega_1 = \{ \textbf{F}_{i_1},\textbf{F}_{i_2},\cdots,\textbf{F}_{i_s} \}$ and $\Omega_2 = \{ \textbf{F}_{j_1},\textbf{F}_{j_2},\cdots,\textbf{F}_{j_r} \}$
with $1 \leq i_1 < \cdots < i_s \leq 2a$ and $1 \leq j_1 < \cdots < j_r \leq 2a$. Let $\vert \Omega_1 \cap \Omega_2 \vert = p$. Then the product matrix $\textbf{F}_{i_1}\textbf{F}_{i_2}\cdots \textbf{F}_{i_s}$ commutes with $\textbf{F}_{j_1}\textbf{F}_{j_2}\cdots \textbf{F}_{j_r}$ if exactly one of the following is satisfied, and anticommutes otherwise.
\begin{enumerate}
\item $r,s$ and $p$ are all odd.
\item The product $rs$ is even and $p$ is even (including 0).
\end{enumerate}
\end{lemma}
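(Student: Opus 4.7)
The plan is to compute the commutator sign $c\in\{\pm 1\}$ defined by $P_1 P_2 = c\, P_2 P_1$, where $P_1 \triangleq \textbf{F}_{i_1}\cdots\textbf{F}_{i_s}$ and $P_2 \triangleq \textbf{F}_{j_1}\cdots\textbf{F}_{j_r}$; commutativity will then be equivalent to $c=+1$, and the two cases listed in the lemma should emerge by splitting the resulting parity formula on the parity of $rs$. The whole argument reduces to a careful count of sign flips arising from the anticommutation hypothesis for distinct $\textbf{F}_i$'s, with an overlap correction coming from indices that happen to lie in $\Omega_1\cap\Omega_2$.

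I would transport $P_1$ through $P_2$ one factor at a time, pushing each $\textbf{F}_{i_k}$ (for $k=s,s-1,\dots,1$) past the entire block $\textbf{F}_{j_1}\cdots\textbf{F}_{j_r}$ without disturbing the internal order of that block. Each elementary swap of $\textbf{F}_{i_k}$ with an $\textbf{F}_{j_l}$ flips the sign by $-1$ when $\textbf{F}_{i_k}\neq\textbf{F}_{j_l}$ and leaves the sign unchanged when $\textbf{F}_{i_k}=\textbf{F}_{j_l}$; since the $\textbf{F}_{j_l}$'s are pairwise distinct, the second case occurs at most once, and it occurs exactly when $\textbf{F}_{i_k}\in\Omega_2$. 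Writing $\epsilon_k = [\textbf{F}_{i_k}\in\Omega_2]$, transporting $\textbf{F}_{i_k}$ all the way across $P_2$ thus contributes an overall sign $(-1)^{r-\epsilon_k}$.

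Multiplying these contributions over $k=1,\ldots,s$ and using $\sum_k \epsilon_k = |\Omega_1 \cap \Omega_2| = p$ should yield
\begin{equation*}
P_1 P_2 \;=\; (-1)^{rs-p}\, P_2 P_1,
\end{equation*}
so that the products commute iff $rs\equiv p \pmod 2$. A case split then finishes the argument: if $rs$ is even, commutativity requires $p$ even, which is case (2); if $rs$ is odd (equivalently both $r$ and $s$ are odd), commutativity requires $p$ odd, which is case (1); in every remaining situation $rs-p$ is odd and the two products anticommute.

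The only delicate point is the coincidence $\textbf{F}_{i_k}=\textbf{F}_{j_l}$: because the two factors are literally the same matrix they must commute rather than anticommute, so the naive count of $rs$ anticommutation-induced sign flips has to be corrected by the overlap size $p$. That correction is precisely what produces the dichotomy between cases (1) and (2), and it is the step where a hurried sign bookkeeping is most likely to go wrong; everything else is a mechanical consequence of the pairwise anticommutation relations.
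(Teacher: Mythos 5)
Your proof is correct and is essentially the paper's own argument: the paper likewise reduces everything to the identity $P_1P_2=(-1)^{rs-p}P_2P_1$ by counting sign flips, merely sweeping the factors of $P_2$ across $P_1$ (each overlap factor contributing $(-1)^{s-1}$ and each non-overlap factor $(-1)^{s}$, giving $(-1)^{p(s-1)}(-1)^{(r-p)s}=(-1)^{rs-p}$) rather than the mirror-image transport you perform. The concluding parity case split is identical in both.
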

\begin{proof}
For $\textbf{F}_{j_k} \in \Omega_1 \cap \Omega_2$, we note that
\begin{equation}
(\textbf{F}_{i_1}\textbf{F}_{i_2} \cdots \textbf{F}_{i_s})\textbf{F}_{j_k} = (-1)^{s-1}\textbf{F}_{j_k}(\textbf{F}_{i_1}\textbf{F}_{i_2} \cdots \textbf{F}_{i_s})
\end{equation}
\noindent and
\begin{equation}
(\textbf{F}_{i_1}\textbf{F}_{i_2} \cdots \textbf{F}_{i_s})\textbf{F}_{j_k} = (-1)^{s}\textbf{F}_{j_k}(\textbf{F}_{i_1}\textbf{F}_{i_2} \cdots \textbf{F}_{i_s})
\end{equation}
\noindent otherwise. Now,\\
$(\textbf{F}_{i_1}\textbf{F}_{i_2} \cdots \textbf{F}_{i_s})(\textbf{F}_{j_1}\textbf{F}_{j_2} \cdots \textbf{F}_{j_r})$
\begin{eqnarray*}
& = & (-1)^{p(s-1)}(-1)^{(r-p)s}(\textbf{F}_{j_1}\textbf{F}_{j_2} \cdots \textbf{F}_{j_r})(\textbf{F}_{i_1}\textbf{F}_{i_2} \cdots \textbf{F}_{i_s})\\
& = & (-1)^{rs-p}(\textbf{F}_{j_1}\textbf{F}_{j_2} \cdots \textbf{F}_{j_r})(\textbf{F}_{i_1}\textbf{F}_{i_2} \cdots \textbf{F}_{i_s}).
\end{eqnarray*}
\emph{case} 1). Since $r,s$ and $p$ are all odd, $(-1)^{rs-p}$ = 1.\\
\emph{case} 2). The product $rs$ is even and  $p$ is even (including 0). Hence $(-1)^{rs-p}$ = 1.
\end{proof}

\begin{lemma}\label{lemma4}
Let $\textbf{F}_i, i = 1,2,\cdots,2a$ be $2^a \times 2^a$ unitary, pairwise anticommuting matrices. Then, the product matrix $\textbf{F}_1^{\lambda_1}\textbf{F}_2^{\lambda_2}\cdots\textbf{F}_{2a}^{\lambda_{2a}}, \lambda_i \in \{0,1\}, i=1,2,\cdots, 2a$, with the exception of $\textbf{I}_{2^a}$, is traceless.
\end{lemma}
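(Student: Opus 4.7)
The plan is to show that any nontrivial product $P = \textbf{F}_{i_1}\textbf{F}_{i_2}\cdots\textbf{F}_{i_s}$ (with $s\geq 1$) anticommutes with at least one of the generators $\textbf{F}_k$; then the standard trace trick kills the trace. Specifically, since the $\textbf{F}_i$ are unitary (hence invertible), an identity $\textbf{F}_k P = -P\textbf{F}_k$ gives $\textbf{F}_k P \textbf{F}_k^{-1} = -P$, and taking traces with cyclicity yields $\mathrm{tr}(P) = \mathrm{tr}(\textbf{F}_k P \textbf{F}_k^{-1}) = -\mathrm{tr}(P)$, forcing $\mathrm{tr}(P)=0$.

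First I would normalize the product. Since $(\lambda_1,\ldots,\lambda_{2a}) \neq (0,\ldots,0)$, the number of nontrivial factors $s = \sum \lambda_i$ satisfies $1 \leq s \leq 2a$, and we may write $P = \textbf{F}_{i_1}\cdots\textbf{F}_{i_s}$ with $i_1 < \cdots < i_s$ (pairwise anticommutativity only changes $P$ by a sign, which is irrelevant for traceless-ness). Set $\Omega_1 = \{\textbf{F}_{i_1},\ldots,\textbf{F}_{i_s}\}$.

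Next, I would produce the required anticommuting generator by splitting into two cases based on the parity of $s$, invoking Lemma \ref{lemma_3} with $\Omega_2 = \{\textbf{F}_k\}$ (so $r = 1$). The lemma gives that $P$ and $\textbf{F}_k$ commute when $rs - p$ is even and anticommute otherwise, where $p = |\Omega_1 \cap \Omega_2| \in \{0,1\}$. Concretely:
\begin{itemize}
\item If $s$ is odd, then $s < 2a$ (because $2a$ is even), so there exists $\textbf{F}_k \notin \Omega_1$; here $p=0$ and $rs-p = s$ is odd, so $\textbf{F}_k$ anticommutes with $P$.
\item If $s$ is even, then $s \geq 2$, so we may pick any $\textbf{F}_k \in \Omega_1$; here $p=1$ and $rs-p = s-1$ is odd, so again $\textbf{F}_k$ anticommutes with $P$.
\end{itemize}

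Finally, I would conclude by applying the trace trick above. The main (very mild) obstacle is just bookkeeping: making sure the case split exhausts all $s \in \{1,\ldots,2a\}$ and that the chosen $\textbf{F}_k$ actually exists in each case — both are handled by noting that $2a$ is even (so $s$ odd forces $s \leq 2a-1$) and that $s \geq 2$ whenever $s$ is even and nonzero. Once the anticommuting $\textbf{F}_k$ is produced, the rest is a one-line trace computation using unitarity and the cyclic property of the trace.
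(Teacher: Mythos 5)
Your proof is correct and follows essentially the same route as the paper's: the anticommutation--conjugation trace trick combined with Lemma~\ref{lemma_3} to exhibit an anticommuting partner for each nontrivial product. Your case split (choosing a generator outside the product when $s$ is odd, which exists since $2a$ is even, and one inside the product when $s$ is even) merely makes explicit the step the paper only asserts, namely that such a partner always exists.
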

\begin{proof}
It is well known that $tr(\textbf{AB})=tr(\textbf{BA})$ for any two matrices $\textbf{A}$ and $\textbf{B}$. Let $\textbf{A}$ and $\textbf{B}$ be two invertible, $n \times n$ anticommuting matrices. So,
\begin{eqnarray*}
\textbf{A}\textbf{B} &=& -\textbf{B}\textbf{A}. \\ 
\textbf{A}\textbf{B}\textbf{A}^{-1} &=& -\textbf{B}.\\
tr(\textbf{A}\textbf{B}\textbf{A}^{-1}) &=& -tr(\textbf{B}).\\
tr(\textbf{A}^{-1}\textbf{A}\textbf{B}) = -tr(\textbf{B}) &\Leftrightarrow& tr(\textbf{B}) = -tr(\textbf{B}).
\end{eqnarray*}
\begin{equation}\label{trace}
\therefore tr(\textbf{B}) = 0.
\end{equation} 
Similarly, it can be shown that $tr(\textbf{A}) = 0$. By applying Lemma \ref{lemma_3}, it can be seen that any product matrix $\textbf{F}_1^{\lambda_1^\prime}\textbf{F}_2^{\lambda_2^\prime}\cdots \textbf{F}_{2a}^{\lambda_{2a}^\prime}$, anticommutes with some other product matrix from the set $\{\textbf{F}_1^{\lambda_1}\textbf{F}_2^{\lambda_2}\cdots\textbf{F}_{2a}^{\lambda_{2a}}, \lambda_i \in \{0,1\}, i = 1,2,3,\cdots,2a\}$. Hence, from the result obtained in \eqref{trace}, we can say that every product matrix $\textbf{F}_1^{\lambda_1}\textbf{F}_2^{\lambda_2}\cdots\textbf{F}_{2a}^{\lambda_{2a}}$ except $\textbf{I}_{2^a}$ is traceless.
\end{proof}

From Theorem \ref{thm1}, to get a rate-1, 4-group decodable STBC, we need 3 pairwise anticommuting, anti-Hermitian matrices which commute with a group of $2^{a-1}$ Hermitian, pairwise commuting matrices. Once these are identified, the other weight matrices can be easily obtained. From \cite{TiH}, one can obtain $2a$ pairwise anticommuting, anti-Hermitian matrices, presented here for completeness.

Let 
\begin{equation*}
\textbf{P}_1 =\left[\begin{array}{rr}
0 & 1 \\
-1 & 0 \\
\end{array}
\right],
\textbf{P}_2 =\left[\begin{array}{rr}
0 & j \\
j & 0 \\
\end{array}
\right], 
\textbf{P}_3 =\left[\begin{array}{rr}
1 & 0 \\
0 & -1 \\
\end{array}
\right]
\end{equation*}
and 
$\textbf{A}^{\otimes^{m}} \triangleq \underbrace{\textbf{A}\otimes \textbf{A}\otimes \textbf{A} \cdots \otimes \textbf{A} }_{m~~times  } $. \\
The $2a$ anti-Hermitian, pairwise anti-commuting matrices are 
\begin{eqnarray*}
\textbf{F}_1 &= &\pm j \textbf{P}_3^{\otimes^{a}}, \\
\textbf{F}_2 &= & \textbf{I}_2^{\otimes^{a-1}}  \bigotimes \textbf{P}_1, \\                                                                                
\textbf{F}_3 &= &\textbf{I}_2^{\otimes^{a-1}}  \bigotimes \textbf{P}_2, \\
. & . \\
. & . \\
\end{eqnarray*}
\begin{eqnarray*}
\textbf{F}_{2k} &= & \textbf{I}_2^{\otimes^{a-k}} \bigotimes  \textbf{P}_1 \bigotimes \textbf{P}_3^{\otimes^{k-1}}, \\
\textbf{F}_{2k+1} &= &\textbf{I}_2^{\otimes^{a-k}} \bigotimes \textbf{P}_2 \bigotimes \textbf{P}_3^{\otimes^{k-1}}, \\
. & . \\
. & . \\
\textbf{F}_{2a} &= &\textbf{P}_1 \bigotimes \textbf{P}_3^{\otimes^{a-1}}.          
\end{eqnarray*} 

For a set $\mathcal{S} = \{a_1,a_2,\cdots,a_n\}$, define $\mathbb{P}(\mathcal{S})$ as
\begin{equation*}
\mathbb{P}(\mathcal{S}) \triangleq \{a_1^{\lambda_1}a_2^{\lambda_2}\cdots a_n^{\lambda_n}, \lambda_i \in \{0,1\}\}.
\end{equation*} 

We choose $\textbf{F}_1$, $\textbf{F}_2$ and $\textbf{F}_3$ to be the three pairwise anticommuting, anti-Hermitian matrices (to be placed in the top row along with $\textbf{I}_n$ in Table \ref{table1}. Consider the set $\mathcal{S} = \{j\textbf{F}_4\textbf{F}_5, j\textbf{F}_6\textbf{F}_7,$ $\cdots, j\textbf{F}_{2a-2}\textbf{F}_{2a-1}, \textbf{F}_1\textbf{F}_2\textbf{F}_3 \}$, the cardinality of which is $a-1$. Using Lemma \ref{lemma_2} and Lemma \ref{lemma_3}, one can note that the set consists of pairwise commuting matrices which are Hermitian. Moreover, one can note that each of the matrices in the set also commutes with $\textbf{F}_1$, $\textbf{F}_2$ and $\textbf{F}_3$. Hence,  $\mathbb{P}(\mathcal{S})$, which has cardinality $2^{a-1}$ is also a set with pairwise commuting, Hermitian matrices which also commute with $\textbf{F}_1$, $\textbf{F}_2$ and $\textbf{F}_3$. The linear independence of $\mathbb{P}(\mathcal{S})$ over $\mathbb{C}$ is easy to see by applying Lemma \ref{lemma_1}. Hence,  we have 3 pairwise anticommuting, anti-Hermitian matrices which commute with a group of $2^{a-1}$ Hermitian, pairwise commuting matrices. Having obtained these, the other weight matrices are obtained from Theorem \ref{thm1}. 

\subsection{An example - $n=8$}
 To illustrate with an example, we consider the case $n = 8$. Let $\textbf{F}_i, i=1,2,\cdots,6$ denote the 6 pairwise anticommuting, anti-Hermitian matrices. Choose $\textbf{F}_1$, $\textbf{F}_2$ and $\textbf{F}_3$ to be the three anticommuting matrices. Let
\begin{equation*}
\mathcal{S} = \{j\textbf{F}_4\textbf{F}_5, \textbf{F}_1\textbf{F}_2\textbf{F}_3\}
\end{equation*}
and 
\begin{equation*}
\mathbb{P}(\mathcal{S}) = \{\textbf{I}_8, j\textbf{F}_4\textbf{F}_5, \textbf{F}_1\textbf{F}_2\textbf{F}_3, j\textbf{F}_1\textbf{F}_2\textbf{F}_3\textbf{F}_4\textbf{F}_5\}.
\end{equation*}

The 16 weight matrices of the rate-1, 4-group decodable code for 8 antennas are 
as shown below. Each column corresponds to the weight matrices in a group. Note that the product of any two matrices in the first group is some other matrix in the same group.

\begin{center}
\begin{tabular}{|c|c|c|c|} 
\hline
$\textbf{I}_8$   &  $\textbf{F}_1$   &  $\textbf{F}_2$  & $\textbf{F}_3 $ \\ \hline
$j\textbf{F}_4\textbf{F}_5$   & $j\textbf{F}_1\textbf{F}_4\textbf{F}_5$  & $j\textbf{F}_2\textbf{F}_4\textbf{F}_5$ &  $j\textbf{F}_3\textbf{F}_4\textbf{F}_5$ \\ \hline
$\textbf{F}_1\textbf{F}_2\textbf{F}_3$ & $-\textbf{F}_2\textbf{F}_3$  & $\textbf{F}_1\textbf{F}_3$  & $-\textbf{F}_1\textbf{F}_2$ \\ \hline
$j\textbf{F}_1\textbf{F}_2\textbf{F}_3\textbf{F}_4\textbf{F}_5$   &  $-j\textbf{F}_2\textbf{F}_3\textbf{F}_4\textbf{F}_5$ & $j\textbf{F}_1\textbf{F}_3\textbf{F}_4\textbf{F}_5$  & $-j\textbf{F}_1\textbf{F}_2\textbf{F}_4\textbf{F}_5$ \\ \hline
\end{tabular}
\end{center}

\subsection{Coding gain calculations}\label{sec4A}
Let $\Delta(\textbf{S},\textbf{S}^\prime) \triangleq det\big(\Delta \textbf{S}\Delta \textbf{S}^H\big)$, where $\Delta \textbf{S} \triangleq \textbf{S} - \textbf{S}^\prime, \textbf{S} \neq \textbf{S}^\prime $ denotes the codeword difference matrix. Let $\Delta s_i \triangleq s_i - s_i^{\prime}, i =1,2,\cdots,2n_t$, where $s_i$ and $s_i^{\prime}$ are the real symbols encoding codeword matrices $\textbf{S}$ and $\textbf{S}^\prime$, respectively. Hence,

\begin{eqnarray*}
\Delta(\textbf{S},\textbf{S}^\prime) & = & det\left(\sum_{i=1}^{2n_t}\Delta s_i \textbf{A}_i \sum_{m=1}^{2n_t}\Delta s_m \textbf{A}_m^H\right) \\
& = & det\left(\sum_{i=1}^{2n_t} \sum_{m=1}^{2n_t} \Delta s_i \Delta s_m \textbf{A}_i \textbf{A}_m^H\right). 
\end{eqnarray*}
Note that because of the nature of construction of the weight matrices,
\begin{equation*}
\textbf{A}_i\textbf{A}_m^H = \textbf{A}_{\frac{pn_t}{2}+i}\textbf{A}_{\frac{pn_t}{2}+m}^H, i,m \in \{1,2,3,4\}~~ p \in \{1,2,3\}.
\end{equation*}
Further, since the code is 4-group decodable,
\begin{eqnarray*}
\Delta(\textbf{S},\textbf{S}^\prime) & = & det\left(\sum_{ p=0}^{3}\left(\sum_{i=\frac{pn_t}{2}+1}^{\frac{(p+1)n_t}{2}} \Delta s_i^2\textbf{I}_{n_t} +2 \sum_{i=\frac{pn_t}{2}+1}^{\frac{(p+1)n_t}{2}-1}\sum_{m=i + 1}^{\frac{(p+1)n_t}{2}}\Delta s_i\Delta s_m \textbf{A}_i\textbf{A}_m^H\right)\right).
\end{eqnarray*}
Since all the weight matrices in the first group are Hermitian and pairwise commuting and the product of any two such matrices is some other matrix in the same group.  It is well known that commuting matrices are simultaneously diagonalizable. Hence,
\begin{equation*}
\textbf{A}_i = \textbf{E}\textbf{D}_i\textbf{E}^H, i \in \left\{2,3,\cdots, \frac{n_t}{2} \right\},
\end{equation*} 
where, $\textbf{D}_i$ is a diagonal matrix. Since $\textbf{A}_i$ is Hermitian as well as unitary, the diagonal elements of $\textbf{D}_i$ are $\pm 1$. In addition, from Lemma \ref{lemma4}, there is an equal number of '1's and '-1's.
In fact, because of the nature of construction of the matrices $\textbf{F}_i,i=1,2,\cdots,2a$, the product matrices $\textbf{F}_i\textbf{F}_{i+1}$, for even $i$, and the product matrix  $\textbf{F}_1\textbf{F}_{2}\textbf{F}_{3}$ are always diagonal (easily seen from the definition of $\textbf{F}_i$, $i=1,2,\cdots,2a$). Hence, all the weight matrices of the first group excluding $\textbf{A}_1 = \textbf{I}_{n_t}$ are diagonal with the diagonal elements being $\pm 1 $. Since these diagonal matrices also commute with $\textbf{F}_2$ and $\textbf{F}_3$, the diagonal entries are such that for every odd $i$, if the $(i,i)^{th}$ entry is 1(-1), then, the $(i+1,i+1)^{th}$ entry is also 1(-1, resp.).  To summarize, the properties of $\textbf{A}_i$, $i=2,\cdots, \frac{n_t}{2}$ are listed below.
\begin{eqnarray}
\label{p1}
 \textbf{A}_i & = &  \textbf{A}_i^H \\
 \label{p2} 
\textbf{A}_i^2 & = & \textbf{I}_{n_t} \\ 
\label{p3}
\textbf{A}_i(m,n) & = & 0, m \neq n \\ 
\label{p4} 
\textbf{A}_i(j,j) & = & \pm 1,j=1,2,\cdots, n_t \\
\label{p5}
tr(\textbf{A}_i) & = & 0 \\
\label{p6}
\textbf{A}_i(j,j) & = & \textbf{A}_i(j+1,j+1), j = 1, 3, 5,\cdots,n_t-1 \\
\label{p7}
\textbf{A}_i \textbf{A}_j & = & \textbf{A}_k, ~~~~ i,j,k \in \left\{1,2,\cdots,\frac{n_t}{2}\right\}.
\end{eqnarray}

In view of these properties,

\begin{equation*}
\Delta(\textbf{S},\textbf{S}^\prime) =  det\left(\sum_{p=0}^{3}\left(\sum_{i=\frac{pn_t}{2}+1}^{\frac{(p+1)n_t}{2}} \Delta s_i^2\textbf{I}_{n_t} + 2\sum_{i=\frac{pn_t}{2}+1}^{\frac{(p+1)n_t}{2}-1}\sum_{m=i + 1}^{\frac{(p+1)n_t}{2}}\Delta s_i \Delta s_m \textbf{D}_{im}\right) \right),
\end{equation*}
where, $\textbf{D}_{im} = \textbf{A}_k$ for some $k \in \{1,2,\cdots,\frac{n_t}{2} \}$. So,

\begin{equation*}
\Delta(\textbf{S},\textbf{S}^\prime) =  det\big( \prod_{j=1}^{n_t}\sum_{p=0}^{3}( \sum_{i=1}^{\frac{n_t}{2}}d_{ij} \Delta s_{\frac{pn_t}{2}+i})^2\big) 
\end{equation*}

where, $d_{ij} = \pm 1$ and $d_{1j} = 1$. In fact, $d_{ij} = \textbf{A}_i(j,j)$, $i = 1,2, 3, \cdots, \frac{n_t}{2}$. 
Hence,
\begin{equation*}
\min_{\textbf{S},\textbf{S}^\prime}(\Delta(\textbf{S},\textbf{S}^\prime))  =  det\left(\prod_{j=1}^{n_t}\left(\sum_{i=1}^{\frac{n_t}{2}}d_{ij} \Delta s_i\right)^2\right).  
\end{equation*}

From \eqref{p6},
\begin{equation}\label{tem}
\min_{\textbf{S},\textbf{S}^\prime}(\Delta(\textbf{S},\textbf{S}^\prime))  =  det\big(\prod_{j=1}^{\frac{n_t}{2}}(\sum_{i=1}^{\frac{n_t}{2}}d_{i(2j-1)} \Delta s_i)^4\big).  
\end{equation}

We need the minimum determinant to be as high a non-zero number as possible. In this regard, let 
\begin{equation}\label{wmatrix}
\textbf{W} \triangleq \sqrt{\frac{2}{n_t}}[w_{ij}], w_{ij} = d_{i(2j-1)}, i,j = 1,2,\cdots, \frac{n_t}{2} 
\end{equation}
and 
\begin{equation*}
\textbf{y}_p \triangleq [y_{\frac{n_tp}{2}+1},y_{\frac{n_tp}{2}+2},\cdots, y_{\frac{n_t(p+1)}{2}}]^T = \textbf{W} [s_{\frac{n_tp}{2}+1},s_{\frac{n_tp}{2}+2},\cdots, s_{\frac{n_t(p+1)}{2}}]^T, p = 0,1,2,3.
\end{equation*}

\begin{lemma}
$\textbf{W}$ as defined in \eqref{wmatrix} is a unitary matrix.
\end{lemma}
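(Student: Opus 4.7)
The plan is to show unitarity by verifying that the columns of $\textbf{W}$ are orthonormal, exploiting the fact that the matrices $\textbf{A}_1,\ldots,\textbf{A}_{n_t/2}$ are real diagonal $\pm 1$ matrices closed under multiplication (up to sign). Since all entries $w_{ij} = d_{i(2j-1)} \in \{\pm 1\}$ are real, $\textbf{W}$ is real and unitarity is equivalent to $\textbf{W}^T\textbf{W} = \textbf{I}_{n_t/2}$, i.e.,
\begin{equation*}
\frac{2}{n_t}\sum_{k=1}^{n_t/2} d_{i(2k-1)}\,d_{j(2k-1)} = \delta_{ij},\qquad i,j \in \{1,2,\ldots,n_t/2\}.
\end{equation*}

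The key observation is that, because $d_{ij} = \textbf{A}_i(j,j)$ and the matrices $\textbf{A}_i$ are diagonal by property \eqref{p3}, the product satisfies
\begin{equation*}
d_{i(2k-1)}\,d_{j(2k-1)} = \textbf{A}_i(2k-1,2k-1)\,\textbf{A}_j(2k-1,2k-1) = (\textbf{A}_i\textbf{A}_j)(2k-1,2k-1).
\end{equation*}
By property \eqref{p7}, $\textbf{A}_i\textbf{A}_j = \textbf{A}_l$ for some $l \in \{1,2,\ldots,n_t/2\}$, so everything reduces to summing the odd-indexed diagonal entries of a single matrix $\textbf{A}_l$.

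For $i = j$, property \eqref{p2} gives $\textbf{A}_i\textbf{A}_j = \textbf{I}_{n_t}$, so every term is $1$ and the sum equals $n_t/2$, producing the required diagonal entry $1$. For $i \neq j$, I first argue that $\textbf{A}_i\textbf{A}_j \neq \textbf{I}_{n_t}$: since $\textbf{A}_i$ is Hermitian and unitary by \eqref{p1}--\eqref{p2}, $\textbf{A}_i^{-1} = \textbf{A}_i$, so $\textbf{A}_i\textbf{A}_j = \textbf{I}_{n_t}$ would force $\textbf{A}_j = \textbf{A}_i$, contradicting the linear independence of the weight matrices. Hence $\textbf{A}_l = \textbf{A}_i\textbf{A}_j$ has index $l \geq 2$, so by property \eqref{p5} we have $\mathrm{tr}(\textbf{A}_l)=0$. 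Invoking the pairing property \eqref{p6}, odd- and even-indexed diagonal entries agree, so
\begin{equation*}
0 = \mathrm{tr}(\textbf{A}_l) = \sum_{k=1}^{n_t/2}\bigl(\textbf{A}_l(2k-1,2k-1) + \textbf{A}_l(2k,2k)\bigr) = 2\sum_{k=1}^{n_t/2}\textbf{A}_l(2k-1,2k-1),
\end{equation*}
and the off-diagonal sum vanishes as required.

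The argument is essentially bookkeeping once the three ingredients (diagonality, closure under multiplication, and the odd/even-index pairing forcing half-traces to vanish) are in place; there is no real obstacle. The only subtle point is ruling out $\textbf{A}_i\textbf{A}_j = \textbf{I}_{n_t}$ when $i\neq j$, which uses the Hermitian-unitary property \eqref{p1}--\eqref{p2} together with linear independence of the weight matrices.
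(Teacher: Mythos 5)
Your proof is correct and follows essentially the same route as the paper: reduce $\langle \textbf{w}_i,\textbf{w}_j\rangle$ to (half of) $\mathrm{tr}(\textbf{A}_i\textbf{A}_j)$ via diagonality and the odd/even pairing \eqref{p6}, then invoke closure \eqref{p7} and tracelessness \eqref{p5}. You are in fact slightly more careful than the paper, which asserts $\mathrm{tr}(\textbf{A}_k)=\delta_{ij}$ without explicitly ruling out $\textbf{A}_i\textbf{A}_j=\textbf{I}_{n_t}$ for $i\neq j$; your argument from Hermitian-unitarity and linear independence fills that small gap.
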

\begin{proof}
From \eqref{wmatrix}, it can be noted that the columns of $\textbf{W}$ are obtained from the diagonal elements $\textbf{A}_i$, $i=1,2,\cdots,\frac{n_t}{2}$. Each element of a column $i$ of $\textbf{W}$ corresponds to every odd numbered diagonal element of $\textbf{A}_i$. Denote the $i^{th}$ column of $\textbf{W}$ by $\textbf{w}_i$. Applying \eqref{p6}, \eqref{p7} and \eqref{p5} in that order,
\begin{eqnarray*}
\langle \textbf{w}_i,\textbf{w}_j \rangle & = & tr(\textbf{A}_i\textbf{A}_j) \\
& = & tr(\textbf{A}_k) \\
 & = & \delta_{ij}
\end{eqnarray*} 
where 
\begin{equation*}
\delta_{ij} = \left\{ \begin{array}{ccc}
 0 & \textrm{if} & i \neq j\\
 1 & \textrm{otherwise} & \\
\end{array} \right. 
\end{equation*} 
Hence, $\textbf{W}$ is unitary.
\end{proof}

Substituting $\textbf{y}_p$ in \eqref{tem}, we get 
\begin{equation*}
\min_{\textbf{S},\textbf{S}^\prime}(\Delta(\textbf{S},\textbf{S}^\prime))  =  det\left(\prod_{j=1}^{\frac{n_t}{2}} y_j^4\right). 
\end{equation*}

So, the minimum determinant is a power of the minimum product distance in $n_t/2 $ real dimensions. If $\textbf{y}_p \in \mathbb{Z}^{\frac{n_t}{2}}$, the product distance can be maximized by premultiplying $\textbf{y}_p$ with a suitable unitary rotation matrix $\textbf{U}$ given in \cite{full_div_rot}. This operation maximizes the minimum determinant and hence the coding gain. So, the real symbols of the rate-1, 4-group decodable code are encoded by grouping $\frac{n_t}{2}$ real symbols into 4 groups and each group of symbols taking value from a unitarily rotated vector belonging to $\mathbb{Z}^{\frac{n_t}{2}}$, the rotation matrix being $\textbf{W}^H\textbf{U}$. The ML-decoding complexity of the code is $M^{\frac{n_t}{4}}$, where $M$ is the size of the complex signal constellation. This is because there are $n_t/2$ real symbols per group to be jointly decoded and if we assume a complex constellation of size $M$, the number of complex symbols to be jointly decoded is $n_t/4$.

\section{Extension to higher number of receive antennas}\label{sec5}
When $n_r = 1$, a rate-1, 4-group decodable STBC is the best full-rate STBC possible in terms of ML-decoding complexity and as a result, ergodic capacity. However, when $n_r > 1$, we need more weight matrices to meet the full-rate criterion. Let $n_t = 2^a$. We know that if $\textbf{F}_i,i=1,2,\cdots,2a$ are pairwise anticommuting, invertible matrices, then, the set $\mathcal{F} \triangleq  \{\textbf{F}_1^{\lambda_1}\textbf{F}_2^{\lambda_2}\cdots\textbf{F}_{2a}^{\lambda_{2a}}$, with $\lambda_i \in \{0,1\}, i = 1,2,\cdots,2a\}$ is a linearly independent set over $\mathbb{C}$. Hence, the set $\mathcal{M} = \{ \mathcal{F}, j\mathcal{F} \}$ is linearly independent over $\mathbb{R}$. As a result, the elements of $\mathcal{M}$ can be used as weight matrices of a full-rate STBC for $n_r > 1$. Keeping in view that the ergodic capacity depends on as many non-diagonal entries of the $\textbf{R}$-matrix being zeros, it is important to choose the weight matrices prudently. The idea is that given a full-rate STBC for $n_r-1$ receive antennas, obtain the additional weight matrices of a full-rate STBC for $n_r$ receive antennas by using the weight matrices of a rate-1, 4-group decodable STBC such that the after addition of the new weight matrices, the set of weight matrices is linearly independent over $\mathbb{R}$. This is achieved as follows.
\begin{enumerate}
\item Obtain a rate-1, 4-group decodable STBC by using the construction method detailed in Section \ref{sec4}. Due to the nature of construction, the product of any two weight matrices is always some other weight matrix of the code, up to negation. Denote the set of weight matrices by $\mathcal{G}_1$. The ML-decoding complexity of the code is $M^{\frac{n_t}{4}}$.
\item From the set $\mathcal{F}$, choose a matrix that does not belong to $\mathcal{G}_1$ and multiply it with the elements of $\mathcal{G}_1$ to obtain a new set of weight matrices, denoted by $\mathcal{G}_2$. Clearly, the two sets will not have any matrix in common. The weight matrices of $\mathcal{G}_2$ form a new, rate-1, 4-group decodable STBC. This is because the ML-decoding complexity does not change by multiplying the weight matrices of a code with a unitary weight matrix. In this case, we have multiplied the elements of $\mathcal{G}_1$ with an element of $\mathcal{F}$, which is a unitary matrix. Now, $\mathcal{G}_1 \bigcup \mathcal{G}_2$ is the set of weight matrices of a rate-2 code with an ML-decoding complexity of $M^{n_t}.M^{\frac{n_t}{4}} = M^{\frac{5n_t}{4}}$. This is achieved by decoding the last $n_t$ symbols with a complexity of $M^{n_t}$ and then conditionally decoding the first $n_t$ symbols using the 4-group decodability property.
\item For increasing number of $n_r$, repeat as in the second step, obtaining new rate-1, 4-group decodable codes and then appending their weight matrices to obtain a new, rate-$n_r$ code with an ML-decoding complexity of  $M^{n_t(n_r-\frac{3}{4})}$.
\item When all the elements of $\mathcal{F}$ have been exhausted (this occurs when $n_r = n_t/2$), the remaining matrices up to a rate of $n_t$ symbols per channel use can be obtained from $j\mathcal{F}$. Note from Lemma \ref{lemma_1} that this does not spoil the linear independence over $\mathbb{R}$ of the weight matrices.
\end{enumerate}

The $\textbf{R}$-matrix of the STBC for $n_r$ receive antennas has the following structure, irrespective of the channel realization.
\begin{equation}\label{rmatrix}
 \textbf{R} =  \left[\begin{array}{cccc}
\textbf{D} &  \textbf{X} &  \ldots &  \textbf{X} \\
\textbf{O}_{2n_t} &  \textbf{D} &  \ldots &  \textbf{X} \\
\vdots &   \ddots  & \ddots & \vdots \\
\textbf{O}_{2n_t} & \textbf{O}_{2n_t} & \ldots & \textbf{D} \\
\end{array} \right]
\end{equation}
where $\textbf{X} \in \mathbb{R}^{2n_t \times 2n_t}$ is a random non-sparse matrix whose entries depend on the channel coefficients and $\textbf{D}  = \textbf{I}_4 \otimes \textbf{V} $, with $\textbf{V} \in \mathbb{R}^{\frac{n_t}{2} \times \frac{n_t}{2}}$ being an upper triangular matrix. As a result of the structure of $\textbf{D}$, the $\textbf{R}$-matrix has a large number of zeros in the upper block, and hence, compared to other existing codes, the proposed codes are expected to have higher ergodic capacity (for $n_r < n_t$) and lower average ML-decoding complexity.

\subsection{The Silver code as a special case of $n_t =2$}

The silver code, which is well known for being a low complexity, full-rate, full-diversity STBC for $n_r \geq 2$, transmits 2 complex symbols per channel use. Its first four weight matrices are that of the Alamouti code, which is a rate-1, 4-group decodable STBC for 2 transmit antennas. The Silver code's next 4 weight matrices are obtained by multiplying the first four weight matrices with $j$. However, to make the code a full-ranked one, the last four symbols take values from a different constellation, which is obtained by unitarily rotating the symbol vector in $\mathbb{Z}[j]^2$. The Silver code compares very well with the well known Golden code in error performance, while offering lower ML-decoding complexity of $M^2$ for square-QAM only.

\section{Simulation results}\label{sec6}
In all the simulation scenarios in this section, we consider the Rayleigh block fading MIMO channel. We consider 8 transmit antennas. To construct a rate-2 code for 8 transmit antennas, we first construct a rate-1, 4-group decodable STBC as described in Section \ref{sec4} and denote the set of obtained weight matrices by $\mathcal{G}_1$. Next we multiply the weight matrices of $\mathcal{G}_1$ by $\textbf{F}_4$ to obtain a new set of weight matrices which is denoted by $\mathcal{G}_2$. The weight matrices of the new STBC with rate-2 are obtained from $\mathcal{G}_1 \bigcup \mathcal{G}_2$. A rate-3 code for 3 receive antennas can be obtained by multiplying the matrices of $\mathcal{G}_1$ with $\textbf{F}_6$ and appending the resulting weight matrices to the set $\mathcal{G}_1 \bigcup \mathcal{G}_2$. The rival code is the punctured perfect code for 8 transmit antennas \cite{new_per}. The ergodic capacity plots of the two codes are shown in Fig. \ref{fig_cap}. As expected, our code achieves higher ergodic capacity, although lower than that of the corresponding MIMO channel. It must however, be noted that both codes help to achieve the same ergodic capacity as that of the MIMO channel for $n_r \geq 8$ because the generator matrix is unitary in that case.  

Fig. \ref{fig1} shows the codeword error performance of our code for $8 \times 2$ system and the punctured perfect code using 4-QAM. The performance is more or less the same. Unlike the Perfect code, our code will not have full-diversity if the design is made as explained in Section \ref{sec4}. However, we have multiplied the weight matrices of $\mathcal{G}_2$ with the scalar $e^{\frac{j\pi}{4}}$ in order to enhance performance. Fig. \ref{fig2} shows the symbol error performances of the two codes. Our code has a better performance at a higher SNR. The reason for this is that the number of symbol errors per codeword error for the punctured Perfect code is more than that of our code. Hence, even though the CER is the same, the SER is different. Our code appears to have full diversity, but we have not been able to prove it. The most important aspect of our code is that it has an ML-decoding complexity of $M^{10}$, while that of the comparable punctured Perfect code is $M^{16}$.

\begin{figure}
\centering
\includegraphics[width=4.5in,height=4in]{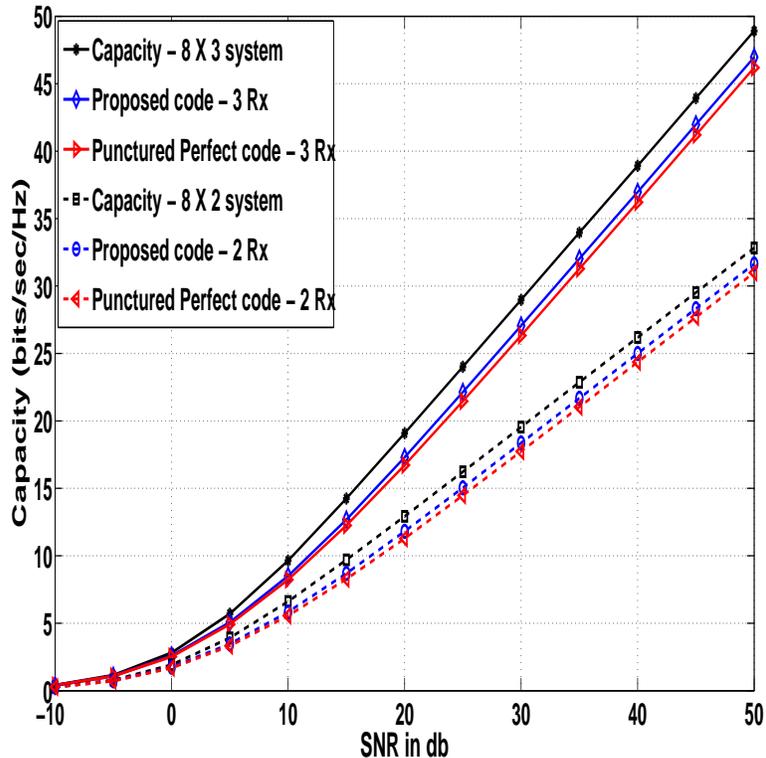}
\caption{Ergodic capacity Vs SNR for codes for $8 \times 2$ and $8 \times 3$ systems}
\label{fig_cap}
\end{figure}

\begin{figure}
\centering
\includegraphics[width=4.5in,height=4in]{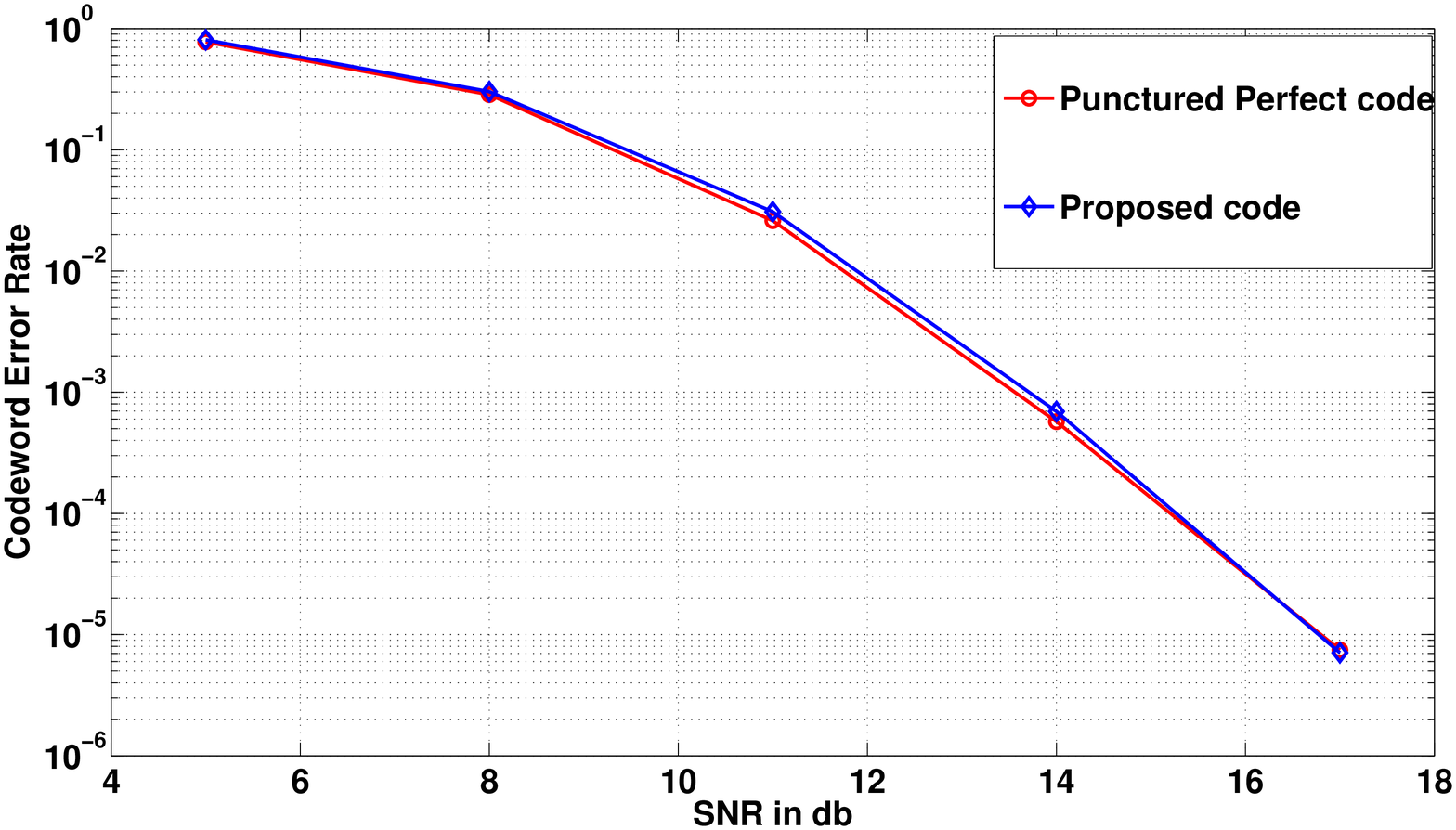}
\caption{CER performance at 4 BPCU for codes for $8 \times 2$ systems}
\label{fig1}
\end{figure}

\begin{figure}
\centering
\includegraphics[width=4.5in,height=4in]{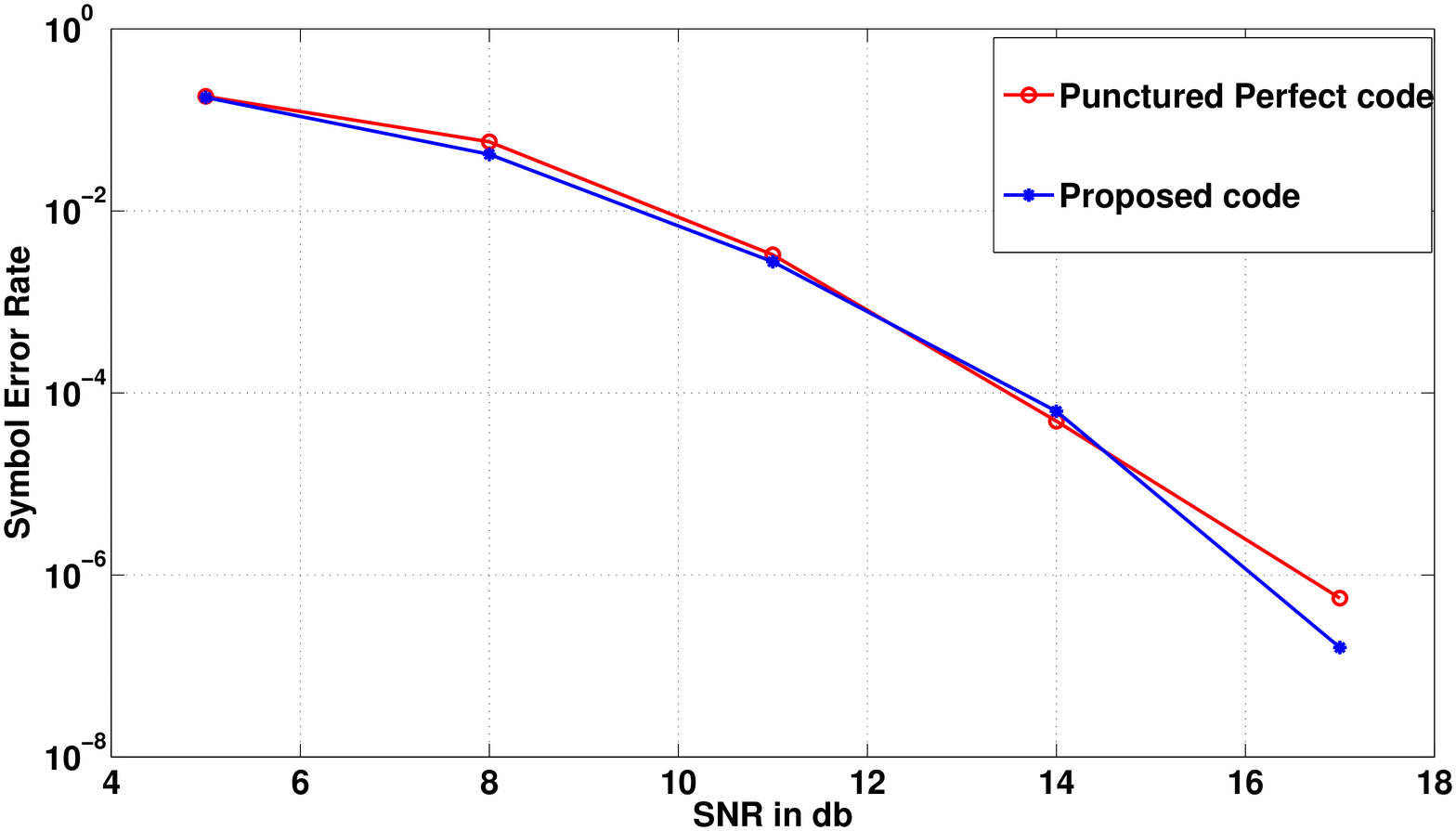}
\caption{SER performance at 4 BPCU for codes for $8 \times 2$ systems}
\label{fig2}
\end{figure}

\section{Discussion} \label{sec7}

In this paper, we proposed a scheme to obtain a full-rate STBC for $2^a$ transmit antennas and any number of receive antennas with reduced ML-decoding complexity. The STBCs thus obtained have higher ergodic capacity at high $SNR$ than existing STBCs for the case $n_r < n_t$. We have, however, not been able to provide a scheme to obtain full-diversity codes from these designs. Also it is to be seen if the proposed codes are better suited than existing codes for sub-optimal decoding techniques like lattice reduction aided detection, owing to the fact that more number of symbols are disentangled from one another than in the case of known codes. These are some of the directions for future research. %%%%%%%%%%%%%%%%%%%%%%%%%%%%%%%%%%%%%%%%%%%%%%%%%%%%%%%%%%%%%%%%%%%%%%%%

\section*{ACKNOWLEDGEMENT}
This work was partly supported by the DRDO-IISc program on Advanced Research in Mathematical Engineering through research grants and the INAE Chair Professorship to B. Sundar Rajan.
%%%%%%%%%%%%%%%%%%%%%%BIBLIOGRAPHY%%%%%%%%%%%%%%%%%%%

\end{document}